\author{D.S. Malyshev\thanks{Laboratory of Algorithms and Technologies for Network Analysis,
National Research University Higher School of Economics, 136
Rodionova str., Nizhny Novgorod 603093, Russia; Department of
Applied Mathematics and Informatics, National Research University
Higher School of Economics, 25/12 Bolshaya Pecherskaya str., Nizhny
Novgorod, 603155, Russia; Department of Mathematical Logic and
Higher Algebra, Lobachevsky State University of Nizhny Novgorod, 23
Gagarina av., Nizhny Novgorod, 603950, Russia; Tel: 8 (831)
436-13-97; Email: dsmalyshev@rambler.ru}$~^{,}$\thanks{The research
is partially supported by Russian Foundation for Basic Research,
grant No 12-01-00749-a; by Federal Target Program "Researh and
educational specialists of innovative Russia", state contract No
14.Â37.21.0393; by LATNA laboratory, National Research University
Higher School of Economics, RF government grant, ag.
11.G34.31.00357; by RF President grant MK-1148.2013.1. This study
was carried out within "The National Research University Higher
School of Economics' Academic Fund Program" in 2013-2014, research
grant No. 12-01-0035.}}
\date{}
\title{The coloring problem for classes with two small obstructions}
\begin{document}
\maketitle
\newtheorem{theorem}{Theorem}
\newtheorem{corollary}{Corollary}
\newtheorem{conjecture}{Conjecture}
\newtheorem{lemma}{Lemma}
\newenvironment{proof}[1][Proof]{\textbf{#1.} }{\ \rule{0.5em}{0.5em}}

\begin{abstract}

The coloring problem is studied in the paper for graph classes
defined by two small forbidden induced subgraphs. We prove some
sufficient conditions for effective solvability of the problem in
such classes. As their corollary we determine the computational
complexity for all sets of two connected forbidden induced subgraphs
with at most five vertices except 13 explicitly enumerated cases.\\

Keywords: vertex coloring, computational complexity, polynomial-time
algorithm
\end{abstract}

\section{Introduction}

The coloring problem is one of classical problems on graphs. Its
formulation is as follows. A \emph{coloring} is an arbitrary mapping
of colors to vertices of some graph. A graph coloring is called
\emph{proper} if any neighbors are colored in different colors. The
\emph{chromatic number of a graph} $G$ (denoted by $\chi(G)$) is the
minimal number of colors in proper colorings of $G$. The \emph
{coloring problem} for a given graph is to find its chromatic
number. The \emph{vertex $k$-colorability problem} is to verify
whether vertices of a given graph can be colored with at most $k$
colors. The \emph{edge $k$-colorability problem} is formulated by
analogy.

A graph $H$ is called an \emph{induced subgraph} of $G$ if $H$ is
obtained from $G$ by deletions of its vertices. The \emph{induced
subgraph relation} is denoted by $\subseteq_i$. In other words,
$H\subseteq_iG$ if $H$ is an induced subgraph of $G$. A \emph{class}
is a set of simple unlabeled graphs. A class of graphs is called
\emph{hereditary} if it is closed under deletions of vertices. It is
well known that any hereditary (and only hereditary) graph class
${\mathcal X}$ can be defined by a set of its forbidden induced
subgraphs ${\mathcal S}$. We write ${\mathcal X}=Free({\mathcal S})$
in this case. If a hereditary class can be defined by a finite set
of forbidden induced subgraphs, then it is called \emph{finitely
defined}.

The coloring problem for $G$-free graphs is polynomial-time solvable
iff $G\subseteq_iP_4$ or $G\subseteq_iP_3\oplus K_1$ (Kral' et al.
2002). A study of forbidden pairs was also initialized in the paper.
When we forbid two induced subgraphs, the situation becomes more
difficult than in the monogenic case. Here only partial results are
known (Kral' et al. 2002; Brandst\"{a}dt et al. 2002; Brandst\"{a}dt
et al. 2006; Schindl 2005; Golovach, Paulusma and Song 2011;
Dabrowski et al. 2012; Golovach and Paulusma 2013). The next
statement is a survey of such achievements (Golovach and Paulusma
2013).

\begin{theorem}
Let $G_1$ and $G_2$ be two fixed graphs. The coloring problem is
NP-complete for $Free(\{G_1,G_2\})$ if:

\begin{itemize}
\item $C_p\subseteq_iG_1$ for some $p\geq 3$ and $C_q\subseteq_iG_2$ for some $q\geq 3$
\item $K_{1,3}\subseteq_iG_1$ and $K_{1,3}\subseteq_iG_2$
\item $K_{1,3}\subseteq_iG_1$ and either $K_4\subseteq_iG_2$ or
$K_4-e\subseteq_iG_2$ $($or vice versa$)$
\item $K_{1,3}\subseteq_iG_1$ and $C_p\subseteq_iG_2$ for some $p\geq 4$
$($or vice versa$)$
\item $G_1$ and $G_2$ contain a spanning subgraph of $2K_2$ as an
induced subgraph
\item $C_3\subseteq_iG_1$ and $K_{1,p}\subseteq_iG_2$ for some $p\geq 5$ $($or vice
versa$)$
\item $C_3\subseteq_iG_1$ and $P_{164}\subseteq_iG_2$ $($or vice
versa$)$
\item $C_p\subseteq_iG_1$ for $p\geq 5$ and $G_2$ contains a spanning subgraph of $2K_2$ as an
induced subgraph $($or vice versa$)$
\item either $C_p\oplus K_1\subseteq_iG_1$ for $p\in \{3,4\}$ or $\overline{C_q}\subseteq_iG_1$ for $q\geq 6$ and
$G_2$ contains a spanning subgraph of $2K_2$ as an induced subgraph
$($or vice versa$)$
\end{itemize}

It is polynomial-time solvable for $Free(\{G_1,G_2\})$ if:
\begin{itemize}
\item $G_1$ and $G_2$ are induced subgraphs of $P_4$ or $P_3\oplus
K_1$
\item $G_1\subseteq_iK_{1,3}$ and $G_2\subseteq_iC_3\oplus K_1$ $($or vice
versa$)$
\item $G_1\subseteq_ipaw$ and $G_2\neq K_{1,5}$ is a forest with at most six vertices $($or
vice versa$)$
\item $G_1\subseteq_ipaw$ and either $G_2\subseteq_i pK_2$ or $G_2\subseteq_i P_5\oplus pK_1$
for some $p\geq 1$ $($or vice versa$)$
\item $G_1\subseteq_iK_p$ for $p\geq 3$ and either $G_2\subseteq_i qK_2$ or $G_2\subseteq_i P_5\oplus qK_1$
for some $q\geq 1$ $($or vice versa$)$
\item $G_1\subseteq_igem$ and either $G_2\subseteq_i P_4\oplus K_1$ or $G_2\subseteq_i
P_5$ $($or vice versa$)$
\item $G_1\subseteq_i{\overline P_5}$ and either $G_2\subseteq_i P_4\oplus K_1$ or $G_2\subseteq_i
2K_2$ $($or vice versa$)$
\end{itemize}
\end{theorem}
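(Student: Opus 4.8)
The plan is to treat the statement as a catalogue and to prove its two directions by separate, uniform mechanisms, since each listed pair $(G_1,G_2)$ calls either for a hardness reduction or for a structural algorithm. Throughout I would use the monotonicity of hardness under hereditary containment: if $\mathcal Y$ is a hereditary class on which the coloring problem is already NP-complete and $\mathcal Y\subseteq Free(\{G_1,G_2\})$, then the problem is NP-complete on $Free(\{G_1,G_2\})$ as well, because forbidding $G_1$ and $G_2$ only enlarges the class relative to a common subclass. The task for the hardness entries thus reduces to exhibiting, for each bullet, a suitable hard subclass $\mathcal Y$.

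For the concrete NP-complete cases I would match each obstruction pattern to a classical hard family. When $C_p\subseteq_iG_1$ and $C_q\subseteq_iG_2$, I would take $\mathcal Y$ to be the graphs of girth greater than $\max(p,q)$: these contain no short cycle, hence no induced $C_p$ or $C_q$, and $3$-colorability is NP-complete on graphs of arbitrarily large girth. When $K_{1,3}$ embeds in both $G_1$ and $G_2$, the claw-free graphs --- in particular the line graphs --- all lie in $Free(\{G_1,G_2\})$, and coloring a line graph is exactly edge coloring of its root, which is NP-complete. The mixed claw versus $K_4$, $K_4-e$, or $C_p$ with $p\geq4$ cases follow the same route through line graphs of appropriate triangle-free graphs, which are simultaneously claw-free and $K_4$-free while edge coloring stays hard; the remaining cases, in which a spanning subgraph of $2K_2$ is induced, reduce to coloring classes whose complements are sufficiently sparse.

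For the polynomial entries I would contain each class in a well-understood one and exploit its structure. If $G_1,G_2\subseteq_iP_4$ (or $P_3\oplus K_1$) the class consists of perfect graphs close to cographs, where $\chi$ is read off a modular decomposition in linear time. The $paw$ cases rest on Olariu's theorem that every connected paw-free graph is triangle-free or complete multipartite: coloring is immediate on complete multipartite components, and on a triangle-free component the second obstruction (a forest, or $pK_2$, or $P_5\oplus pK_1$) pins down the independence and path structure tightly enough to compute $\chi$. The $K_p$, $gem$, and $\overline{P_5}$ cases instead supply a bounded clique number or a usable clique-width or decomposition bound, while the sparse second graph caps $\chi$, after which the coloring problem is settled by a bounded number of fixed-$k$ colorability tests.

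The main obstacle lies on the polynomial side. A clique-width bound alone yields polynomial algorithms only for $k$-colorability with $k$ fixed, not for the chromatic number itself; to recover full coloring one must additionally bound $\chi$, so that finding it reduces to finitely many $k$-colorability instances, or else design an explicit combinatorial recursion. The genuinely delicate cases are exactly those in which the second forbidden graph is sparse --- a forest, $pK_2$, or $P_5\oplus pK_1$ --- since then $\chi$ is not bounded a priori; there the argument must combine both obstructions at once, typically marrying Olariu's structural dichotomy with Ramsey-type estimates, to guarantee that the number of colors stays controlled and that the resulting algorithm runs in polynomial time.
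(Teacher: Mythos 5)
This statement (Theorem~1) is not proved in the paper at all: it is explicitly presented as a survey of results from the cited literature (Kral' et al.\ 2001/2002, Brandst\"adt et al., Schindl, Dabrowski et al., Golovach--Paulusma 2013), so there is no in-paper proof to match your attempt against. Judged on its own terms, your proposal is a sensible roadmap for a few of the hardness bullets --- the monotonicity principle is the right framing, and the large-girth construction for the $C_p$/$C_q$ case and the line-graph/edge-coloring arguments for the $K_{1,3}$ cases are essentially the standard proofs --- but it is far from a proof of the theorem. Concretely: the bullets involving $C_3$ together with $K_{1,p}$ ($p\geq 5$) or with $P_{164}$ are never addressed (the $P_{164}$ case is a specific, technically heavy reduction of Golovach--Paulusma and cannot be absorbed into any of your three mechanisms), and all the bullets involving ``a spanning subgraph of $2K_2$ as an induced subgraph'' are dispatched with the single phrase ``complements are sufficiently sparse,'' which names no hard subclass and no reduction; the actual arguments go through hardness for $\{4K_1,\,2K_2,\,K_2\oplus 2K_1\}$-free graphs (essentially the class $co({\mathcal T})$ of the present paper) and are not routine.

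The polynomial-time half has the larger gaps. The entries $G_2\subseteq_i P_3\oplus K_1$ and $\{K_{1,3},\,C_3\oplus K_1\}$ are omitted entirely. For the $paw$ cases, Olariu's dichotomy is the right starting point, but it only reduces the problem to coloring triangle-free graphs without the second obstruction; when that obstruction is a forest on six vertices this is precisely the main theorem of Dabrowski, Lozin, Raman and Ries, a substantial piece of work that your sketch treats as ``pinned down tightly enough to compute $\chi$.'' Similarly, ``bounded independence/path structure plus Ramsey'' is an idea, not an argument, for the $K_p$ versus $qK_2$ or $P_5\oplus qK_1$ entries. Finally, your stated ``main obstacle'' is not actually an obstacle: by Kobler and Rotics the chromatic number itself (not just $k$-colorability for fixed $k$) is computable in polynomial time on classes of bounded clique-width, which is exactly how the $gem$, $\overline{P_5}$, and $P_5$/$K_5$ entries are handled in the cited sources; so the extra step of bounding $\chi$ that you propose is unnecessary there, while the places where $\chi$ genuinely is unbounded (the forest-type second obstruction) are the ones your outline does not resolve.
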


In the present article we prove some sufficient conditions for
NP-completeness and polynomial-time solvability of the coloring
problem for $\{G_1,G_2\}$-free graphs. They add new information
about its complexity for some cases that Theorem 1 does not cover.
For instance, the problem is appeared to be NP-complete for
$\{K_{1,4},bull\}$-free graphs, but it is polynomial-time solvable
for $Free(\{K_{1,3},P_5\}), Free(\{K_{1,3},hammer\}),
Free(\{P_5,C_4\})$. The complexity was earlier open for these four
cases. As a corollary of the conditions we determine the complexity
for all sets $\{G_1,G_2\}$ of connected graphs with at most five
vertices except 13 listed cases.

\section{Notation}

As usual, $P_n,C_n,K_n,O_n$ and $K_{p,q}$ stand respectively for the
simple path with $n$ vertices, the chordless cycle with $n$
vertices, the complete graph with $n$ vertices, the empty graph with
$n$ vertices and the complete bipartite graph with $p$ vertices in
the first part and $q$ vertices in second. The graph $K_n-e$ is
obtained by deleting an arbitrary edge in $K_n$. The graph
\emph{paw} is obtained from $K_{1,3}$ by adding a new edge incident
to its vertices of degree two. The graphs
$fork,gem,hammer,bull,butterfly$ have the vertex set
$\{x_1,x_2,x_3,x_4,x_5\}$. The edge set for \emph{fork} is
$\{(x_1,x_2),(x_1,x_3),(x_1,x_4),(x_4,x_5)\}$, for \emph{gem} is
$\{(x_1,x_2),\\
(x_1,x_3),(x_1,x_4), (x_1,x_5),(x_2,x_3),(x_3,x_4),(x_4,x_5)\}$, for
\emph{hammer} is
$\{(x_1,x_2),\\
(x_1,x_3),(x_2,x_3),(x_1,x_4),(x_4,x_5)\}$, for \emph{bull} is
$\{(x_1,x_2),(x_1,x_3), (x_2,x_3),(x_1,x_4),\\
(x_2,x_5)\}$, for
\emph{butterfly} is
$\{(x_1,x_2),(x_1,x_3),(x_2,x_3),(x_1,x_4),(x_1,x_5),(x_4,x_5)\}$.

The \emph{complemental graph} of $G$ (denoted by $\overline{G}$) is
a graph on the same set of vertices and two vertices of
$\overline{G}$ are adjacent if and only if they are not adjacent in
$G$. The sum $G_1\oplus G_2$ is the disjoint union of $G_1$ and
$G_2$. The disjoint union of $k$ copies of a graph $G$ is denoted by
$kG$. For a graph $G$ and a set $V\subseteq V(G)$ the formula
$G\setminus V$ denotes the subgraph of $G$ obtained by deleting all
vertices in $V$.

All graph notions and properties that are not formulated in this
paper can be found in the textbooks (Bondy and Murty 2008; Distel
2010).

\section{Boundary graph classes}

The notion of a boundary graph class is a helpful tool for analysis
of the computational complexity of graph problems in the family of
hereditary graph classes. This notion was originally introduced by
Alekseev for the independent set problem (Alekseev 2004). It was
applied for the dominating set problem later (Alekseev, Korobitsyn
and Lozin 2004). A study of boundary graph classes for some graph
problems was extended in the paper (Alekseev et al. 2007), where the
notion was formulated in its most general form. Let us give the
necessary definitions.

Let $\Pi$ be an NP-complete graph problem. A hereditary graph class
is called {\it $\Pi$-easy} if $\Pi$ is polynomial-time solvable for
its graphs. If the problem $\Pi$ is NP-complete for graphs in a
hereditary class, then this class is called {\it $\Pi$-hard}. A
class of graphs is said to be \emph{$\Pi$-limit} if this class is
the limit of an infinite monotonously decreasing chain of $\Pi$-hard
classes. In other words, ${\mathcal X}$ is $\Pi$-limit if there is
an infinite sequence ${\mathcal X_1}\supseteq {\mathcal X_1}
\supseteq \ldots$ of $\Pi$-hard classes, such that ${\mathcal
X}=\bigcap\limits_{k=1}^{\infty}{\mathcal X_k}$. A minimal under
inclusion $\Pi$-limit class is called \emph{$\Pi$-boundary}.

The following theorem certifies the significance of the boundary
class notion (Alekseev 2004).

\begin{theorem}
A finitely defined class is $\Pi$-hard iff it contains some
$\Pi$-boundary class.
\end{theorem}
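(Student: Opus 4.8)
The plan is to prove the biconditional by treating the two implications separately, using throughout two elementary monotonicity facts about the hard/easy dichotomy: if $\mathcal{A}\subseteq\mathcal{B}$ are hereditary and $\mathcal{A}$ is $\Pi$-hard, then $\mathcal{B}$ is $\Pi$-hard (a reduction whose outputs already lie in $\mathcal{A}$ a fortiori produces instances in $\mathcal{B}$), and dually that a subclass of a $\Pi$-easy class is $\Pi$-easy (the same polynomial algorithm applies to fewer inputs). I also record that the set of finite graphs taken up to isomorphism is countable, which will be needed to tame the set-theoretic chains in the second implication.

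For the implication ``contains a boundary class $\Rightarrow$ hard'', let $\mathcal{X}=Free(\mathcal{S})$ with $\mathcal{S}=\{H_1,\dots,H_m\}$ finite, and let $\mathcal{Y}\subseteq\mathcal{X}$ be $\Pi$-boundary, so in particular $\mathcal{Y}$ is $\Pi$-limit: $\mathcal{Y}=\bigcap_{k}\mathcal{X}_k$ for a decreasing chain $\mathcal{X}_1\supseteq\mathcal{X}_2\supseteq\cdots$ of $\Pi$-hard classes. The key claim is that some $\mathcal{X}_{k}$ is already contained in $\mathcal{X}$. Indeed, if every $\mathcal{X}_k\not\subseteq\mathcal{X}$, then each $\mathcal{X}_k$ contains a graph having some $H_i$ as an induced subgraph, hence by heredity $\mathcal{X}_k$ contains $H_i$ itself; since $\mathcal{S}$ is finite, one fixed $H_{i_0}$ lies in infinitely many $\mathcal{X}_k$, and as the chain is decreasing it then lies in all of them, so $H_{i_0}\in\bigcap_k\mathcal{X}_k=\mathcal{Y}\subseteq Free(\mathcal{S})$, contradicting $H_{i_0}\in\mathcal{S}$. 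Thus $\mathcal{X}_{k_0}\subseteq\mathcal{X}$ for some $k_0$, and since $\mathcal{X}_{k_0}$ is $\Pi$-hard the monotonicity fact yields that $\mathcal{X}$ is $\Pi$-hard. This is the direction in which finite definability is genuinely used.

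For the converse ``hard $\Rightarrow$ contains a boundary class'', first observe that a $\Pi$-hard class is trivially $\Pi$-limit (take the constant chain), so $\mathcal{X}$ itself is a $\Pi$-limit class contained in $\mathcal{X}$, and the family $\mathfrak{L}$ of all $\Pi$-limit classes contained in $\mathcal{X}$ is nonempty. I would then apply Zorn's lemma to $\mathfrak{L}$, ordered by inclusion, to extract a minimal element $\mathcal{Y}$. Such a $\mathcal{Y}$ is automatically $\Pi$-boundary: any $\Pi$-limit class $\mathcal{W}\subseteq\mathcal{Y}$ also satisfies $\mathcal{W}\subseteq\mathcal{X}$, hence $\mathcal{W}\in\mathfrak{L}$, and minimality forces $\mathcal{W}=\mathcal{Y}$, so $\mathcal{Y}$ has no proper $\Pi$-limit subclass. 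To legitimize the use of Zorn I must verify that every descending chain in $\mathfrak{L}$ admits a lower bound lying in $\mathfrak{L}$, the natural candidate being the intersection $\mathcal{L}$ of the chain, which is certainly contained in $\mathcal{X}$.

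The hard part will be exactly this lower-bound step: showing that the intersection of a descending chain of $\Pi$-limit classes is again $\Pi$-limit. Here I would exploit countability of the graph universe to replace an arbitrary chain by a countable cofinal subchain $\mathcal{L}_1\supseteq\mathcal{L}_2\supseteq\cdots$ with the same intersection $\mathcal{L}$ (for each graph excluded from $\mathcal{L}$ pick one member of the chain excluding it, then thin out, using total ordering, to a countable decreasing sequence), and then attempt to assemble from the witnessing hard chains of the $\mathcal{L}_n$ a single decreasing sequence of $\Pi$-hard classes with intersection $\mathcal{L}$. The genuine obstruction is that finite intersections of $\Pi$-hard classes need not be $\Pi$-hard, so a naive diagonal fails; the selection must be arranged so that each chosen class is itself $\Pi$-hard while remaining cofinally small, and it is here that I would try to leverage the fact that every hard class is the decreasing intersection of the finitely definable classes obtained by truncating its minimal forbidden subgraphs by size, each truncation staying hard as a superclass. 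Securing hardness along the diagonal is the technical heart of the argument, and I regard it as the principal difficulty of the whole proof; once it is in place, Zorn delivers the minimal class $\mathcal{Y}$ and the converse follows.
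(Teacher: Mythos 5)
The paper itself gives no proof of this statement (it is quoted from Alekseev 2004), so I measure your proposal against the standard argument. Your first implication is correct and is essentially the known one: if a boundary class $\mathcal{Y}=\bigcap_k\mathcal{X}_k$ lies inside $Free(\{H_1,\dots,H_m\})$ and no $\mathcal{X}_k$ were contained in that class, then each (hereditary) $\mathcal{X}_k$ would contain some $H_i$; by pigeonhole and the monotonicity of the chain a fixed $H_{i_0}$ would lie in every $\mathcal{X}_k$, hence in $\mathcal{Y}\subseteq Free(\{H_1,\dots,H_m\})$, a contradiction. So some hard $\mathcal{X}_{k_0}$ is contained in the class, which is therefore hard. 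This part is fine.

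The converse is where the gap is, and it is genuine: everything hinges on the unproved assertion that the intersection of a descending chain of $\Pi$-limit classes is again $\Pi$-limit, and you concede you cannot establish it. Nothing in the definitions supplies it: writing each $\mathcal{L}_n$ as $\bigcap_j\mathcal{Z}^{(n)}_j$ expresses $\mathcal{L}=\bigcap_n\mathcal{L}_n$ as an intersection of hard classes that are not nested across different $n$, and there is no reason a decreasing sequence of \emph{hard} classes with intersection $\mathcal{L}$ can be threaded through them; finite intersections of hard classes need not be hard, and your proposed repair via size-truncations does not visibly produce classes that both remain hard and shrink to $\mathcal{L}$. A telling symptom is that your Zorn argument never uses finite definability of $\mathcal{X}$, so if it worked it would prove that \emph{every} $\Pi$-hard hereditary class contains a boundary class --- a strictly stronger claim than the theorem, which is stated for finitely defined classes precisely because the hypothesis is needed in this direction as well. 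The standard route avoids Zorn entirely: enumerate all graphs $G_1,G_2,\dots$, set $\mathcal{Y}_0=\mathcal{X}$, and let $\mathcal{Y}_k=\mathcal{Y}_{k-1}\cap Free(\{G_k\})$ whenever this class is still $\Pi$-hard, and $\mathcal{Y}_k=\mathcal{Y}_{k-1}$ otherwise. Then $\mathcal{Y}=\bigcap_k\mathcal{Y}_k$ is $\Pi$-limit by construction, and each $\mathcal{Y}_k$ is finitely defined, so minimality of $\mathcal{Y}$ follows from your own first-direction pigeonhole: a limit class $\mathcal{Z}=\bigcap_j\mathcal{Z}_j$ properly contained in $\mathcal{Y}$ and missing some $G_k\in\mathcal{Y}$ would give a single hard $\mathcal{Z}_j$ contained in $\mathcal{Y}_{k-1}\cap Free(\{G_k\})$, contradicting the fact that step $k$ declined to shrink. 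You should replace the Zorn step by this explicit construction.
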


The theorem shows that knowledge of all $\Pi$-boundary classes leads
to a complete classification of finitely defined graph classes with
respect to the complexity of $\Pi$. Two concrete classes of graphs
are known to be boundary for several graph problems. First of them
is ${\mathcal S}$. It is constituted by all forests with at most
three leaves in each connected component. The second one is
${\mathcal T}$, which is a set of the line graphs of graphs in
${\mathcal S}$. The paper (Alekseev et al. 2007) is a good survey
about graph problems, for which either ${\mathcal S}$ or ${\mathcal
T}$ is boundary.

Some classes are known to be limit and boundary for the coloring
problem. The set of all forests (denoted by ${\mathcal F}$) and the
set of line graphs of forests with degrees at most three are limit
classes for it (Lozin and Kaminski 2007). The last class we will
denote by ${\mathcal T'}$. The set $co({\mathcal
T})=\{G:~\overline{G}\in {\mathcal T}\}$ is a boundary class for the
problem (Malyshev 2012(a)). The set of boundary graph classes for
the coloring problem is continuous (Korpeilainen et al. 2011). Some
continuous sets of boundary classes for the vertex $k$-colorability
and the edge $k$-colorability problems are known for any fixed
$k\geq 3$ (Malyshev 2012(a); Malyshev 2012(b)).

\section{NP-completeness of the coloring problem for
$\{K_{1,4},bull\}$-free graphs}

The listed above results on limit and boundary classes for the
coloring problem together with Theorem 1 allow to prove
NP-completeness of the problem for some finitely defined classes.
Namely, if ${\mathcal Y}$ is a finite set of graphs and no one graph
in ${\mathcal Y}$ belongs to a class in  $\{{\mathcal F}, {\mathcal
T'}, co({\mathcal T})\}$, then the problem is NP-complete for
$Free({\mathcal Y})$. But this idea can not be applied to
$Free(\{K_{1,4}, bull\})$, because $K_{1,4}\in {\mathcal F}, bull\in
{\mathcal T'},bull\in co({\mathcal T})$. Nevertheless, the coloring
problem is NP-complete for it. To show this we use the operation
with a graph called the diamond implantation.

Let $G$ be a graph and $x$ be its nonpendant vertex. Applying the
{\it diamond implantation} to $x$ implies:

\begin{itemize}
\item an arbitrary splitting of the neighborhood of $x$ into two nonempty parts $A$ and
$B$

\item deletion of $x$ and addition of new vertices $y_1,y_2,y_3,y_4$

\item addition of all edges of the kind $(y_1,a), a\in A$ and of the
kind $(y_4,b),b\in B$

\item addition of the edges
$(y_1,y_2),(y_1,y_3),(y_2,y_3),(y_2,y_4),(y_3,y_4)$

\end{itemize}

Clearly that for any graph and any its nonleaf vertex applying the
diamond implantation preserves vertex 3-colorability. This property
and the paper (Kochol, Lozin and Randerath 2003) give the key idea
of the proof of Lemma 1.

\begin{lemma}
The vertex $3$-colorability problem $($hence, the coloring
problem$)$ is NP-complete for the class $Free(\{K_{1,4}, bull\})$.
\end{lemma}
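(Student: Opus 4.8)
The plan is to describe a polynomial reduction from the vertex $3$-colorability problem on a class for which it is already known to be NP-complete, using the diamond implantation as the only transformation. By Kochol, Lozin and Randerath 2003 the vertex $3$-colorability problem is NP-complete already for triangle-free graphs of maximum degree four, and I take such a graph $G$ as the input instance. Membership of the problem in NP being clear, it suffices to build in polynomial time a graph $G'\in Free(\{K_{1,4},bull\})$ that is $3$-colorable if and only if $G$ is.

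First I would fix the transformation. Process the vertices of $G$ of degree four one by one; each such vertex $x$ is nonpendant, so the diamond implantation is applicable to it. Since $G$ is triangle-free the four neighbors of $x$ are pairwise non-adjacent, and I split them into two parts $A,B$ of size two each before deleting $x$ and inserting the diamond $y_1,y_2,y_3,y_4$. Processing a vertex does not change the degree of any surviving vertex (it only reroutes the edges formerly incident to $x$ through $y_1$ and $y_4$), so every still-unprocessed degree-four vertex remains of degree four and nonpendant when its turn comes. After all original degree-four vertices have been processed we obtain $G'$; the whole construction performs at most $|V(G)|$ implantations, each adding three vertices, hence runs in polynomial time. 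The equivalence of $3$-colorability is exactly the property quoted before the statement: in any proper $3$-coloring of a diamond the vertices $y_1$ and $y_4$ receive the unique color not appearing on $y_2$ and $y_3$, so the gadget transmits a single color to $A\cup B$ and faithfully simulates $x$.

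The substance of the proof is to check that $G'$ contains neither an induced $K_{1,4}$ nor an induced $bull$. For $K_{1,4}$-freeness I would argue that every vertex of $G'$ of degree at least four is some $y_1$ or $y_4$; such a vertex has $y_2,y_3$ among its neighbors, and since $(y_2,y_3)$ is an edge while the remaining two neighbors form the independent pair $A$ (respectively $B$), the neighborhood has independence number at most three and cannot be the center of an induced $K_{1,4}$; all other vertices have degree at most three. For bull-freeness I would first observe that, because $G$ is triangle-free and the parts $A,B$ are independent, the only triangles of $G'$ are the two triangles $y_1y_2y_3$ and $y_2y_3y_4$ of each implanted diamond. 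In the triangle $y_1y_2y_3$ the vertices $y_2,y_3$ have no private neighbor outside the triangle (their only further neighbor $y_4$ is adjacent to both of them), so only $y_1$ carries pendant-type neighbors; a $bull$ would require two triangle vertices with such neighbors, which is impossible, and the triangle $y_2y_3y_4$ is symmetric. Hence $G'\in Free(\{K_{1,4},bull\})$.

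The main obstacle is the global bookkeeping behind these two verifications: one must rule out induced $K_{1,4}$'s and $bull$'s arising from the interaction of a freshly inserted diamond with the diamonds of neighboring processed vertices and with the untouched low-degree vertices of $G$, rather than only inside a single gadget. The two structural invariants that make this manageable are that every high-degree vertex keeps the edge $(y_2,y_3)$ inside its neighborhood while its two remaining neighbors stay non-adjacent, which bounds the neighborhood independence number by three throughout, and that no triangle of $G'$ other than the diamond triangles is ever created, which confines all potential bulls to a single gadget. Verifying that the balanced $2+2$ split preserves both invariants under repeated implantation is the one point that needs care.
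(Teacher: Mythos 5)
Your proposal is correct and follows essentially the same route as the paper: a polynomial reduction from vertex $3$-colorability of triangle-free graphs with maximum degree four via repeated diamond implantation, whose color-transmitting property gives the equivalence. The only (harmless) difference is that you implant only at the degree-four vertices with a balanced $2+2$ split, whereas the paper implants at every vertex not lying in a triangle with $||A|-|B||\leq 1$; your explicit verification that the result is $\{K_{1,4},bull\}$-free fills in what the paper dismisses as "easy to see".
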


\begin{proof} The vertex 3-colorability problem is known to be NP-complete
for triangle-free graphs with degrees at most four (Maffray and
Preissmann 1996). Let us consider connected such a graph with at
least two vertices. We will sequentially apply the described above
operation to its vertices with edgeless neighborhoods. In other
words, if $H$ is a current graph, then it is applied to an arbitrary
vertex of $H$ that does not belong to any triangle. The sets $A$ and
$B$ are arbitrarily formed with the condition $||A|-|B||\leq 1$. The
whole process is finite, because the number of its steps is no more
than the quantity of vertices in the initial graph. It is easy to
see that the resulted graph belongs to $Free(\{K_{1,4}, bull\})$.
Thus, the vertex 3-colorability problem for triangle-free graphs
with degrees at most four is polynomially reduced to the same
problem for graphs in $Free(\{K_{1,4}, bull\})$. Hence, it is
NP-complete for $Free(\{K_{1,4}, bull\})$.\end{proof}

\section{Some structural results on graphs in some classes defined by
small obstructions}

For a hereditary class ${\mathcal X}$ and a number $k$ the formula
$[{\mathcal X}]_k$ is a set of graphs, for which one can delete at
most $k$ vertices that a result belongs to ${\mathcal X}$.

\begin{lemma}
If some connected graph $G\in Free(\{K_{1,3},P_5\})$ contains an
induced cycle $C$ of length at least four, then $G\in
[Free(\{O_3\})]_5$.
\end{lemma}

\begin{proof} Length of $C$ is equal to either four or five. We will show
first that $C$ dominates all vertices of $G$. Let there is a vertex
of $G$ that does not belong to $C$ and adjacent to no one vertex of
the cycle. Then due to the connectivity of $G$ there are vertices
$x,y\in V(G)\setminus V(C)$, such that $(x,y)\in E(G)$, $x$ is not
adjacent to any vertex of $C$ and $y$ is adjacent to at least one
vertex of $C$. Since $G\in Free(\{K_{1,3}\})$, then $y$ is adjacent
to exactly two vertices of $C$. The vertices $x,y$ and some three
consecutive vertices of the cycle (one of which is adjacent to $y$)
induce the subgraph $P_5$. Thus, $C$ really dominates all vertices
of $G$.

We will show that the graph $G\setminus V(C)$ does not contain three
pairwise nonadjacent vertices. This fact implies the validity of
Lemma 2. Assume that $G\setminus V(C)$ has a set $ V $ of three
pairwise nonadjacent vertices. Since $G$ is $K_{1,3}$-free, then the
intersection of the neighborhood of each vertex in $V$ with $V(C)$
is a set of at least two (three for $C=C_5$) consecutive vertices of
$C$.

Let us consider the case $C=C_5$. No one vertex of $V$ can be
adjacent to all vertices of $C$, since otherwise some vertex of $C$
is adjacent to all vertices of $V$ (and $ G $ contains $K_{1,3}$ as
an induced subgraph). One can assume that no one vertex in $V$ is
adjacent to exactly four vertices of the cycle $ C $, since in this
case the graph $G$ contains the induced cycle $C_4$ and the case
$C=C_4$ will be considered later. Therefore, we can consider only
the situation, where each vertex of $ V $ is adjacent to three
consecutive vertices of $ C $ and the corresponding sets of three
consecutive vertices are distinct (otherwise $G$ contains $K_{1,3}$
as an induced subgraph). Then, some two vertices of $V$ and some
three vertices of $C$ induce $P_5$. So, if $C=C_5$, then we have a
contradiction.

Now we consider the case $C=C_4$. It is easy to verify that avoiding
induced $K_{1,3}$ in $G$ leads to only the following situations:

\begin{itemize}
\item one vertex of $V$ is adjacent to all
vertices of $C$ and the other two vertices of $V$ are adjacent to
disjoint pairs of its consecutive vertices

\item one vertex of $V$ is adjacent to two consecutive vertices of $C$ and
each of the other two vertices is adjacent to three consecutive
vertices of $C$, they have two common neighbors in $C$ and the first
vertex has only one common neighbor in $C$ with each of them

\item each of two vertices of $V$ is adjacent to two consecutive vertices of $C$,
the third one is adjacent to three consecutive vertices of $C$ and
any two vertices of $V$ have only one common neighbor in $C$
\end{itemize}

The graph $ G $ contains $ P_5 $ as an induced subgraph in all three
cases. We come to a contradiction. Thus, the initial assumption was
false. \end{proof}

\begin{lemma}
If some connected graph $G\in
Free(\{K_{1,3},hammer\})$ contains an induced cycle $C_n~(n\geq 7)$,
then $G$ is isomorphic to $C_n$.
\end{lemma}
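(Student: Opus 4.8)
The plan is to show that the cycle $C=C_n$ accounts for all of $G$: if I prove that no vertex outside $C$ can be adjacent to $C$, then connectivity forces $V(G)=V(C)$, and since $C$ is induced this gives $G\cong C_n$. So fix the induced cycle $v_1v_2\cdots v_nv_1$ with $n\ge 7$ and suppose for contradiction that some $u\notin V(C)$ has a neighbour on $C$. The whole argument is an analysis of $N(u)\cap V(C)$, read as a cyclic pattern of maximal neighbour-runs separated by gaps.

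First I would extract everything that $K_{1,3}$-freeness gives. If $u$ is adjacent to $v_i$ but to neither $v_{i-1}$ nor $v_{i+1}$, then $v_i$ together with $u,v_{i-1},v_{i+1}$ induces a claw (here $v_{i-1}\not\sim v_{i+1}$ since $n\ge 5$); hence every neighbour of $u$ lies in a maximal run of at least two consecutive cycle-vertices. Two vertices taken from different runs are at cyclic distance at least two, hence non-adjacent; therefore choosing one vertex from each of three distinct runs would produce a claw centred at $u$, and $u$ adjacent to all of $C$ would give a claw on $u,v_1,v_3,v_5$. Thus $u$ has at most two runs and is not adjacent to all of $C$.

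The decisive input is $hammer$-freeness, which I would package as a single local claim: if $u\sim v_i,v_{i+1}$ and $u\not\sim v_{i+2}$, then necessarily $u\sim v_{i+3}$. Indeed, otherwise $\{u,v_i,v_{i+1},v_{i+2},v_{i+3}\}$ induces a $hammer$ (triangle $u v_i v_{i+1}$ with the path $v_{i+1}v_{i+2}v_{i+3}$ hanging off $v_{i+1}$); this is exactly where $n\ge 7$ is needed, to guarantee $v_{i+3}\not\sim v_i$ and $v_{i+3}\not\sim v_{i+1}$. Applying this claim at the right end of any run shows the gap following it has length exactly one and is immediately followed by another run. Consequently a single run is impossible (it would force $u$ adjacent to all but one vertex of $C$, hence a claw), so $u$ has exactly two runs $R_1,R_2$ separated by two single-vertex gaps, whence $n=|R_1|+|R_2|+2$. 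Since $n\ge 7$, some run, say $R_1$, has length at least three, and then two vertices of $R_1$ at cyclic distance two together with any vertex of $R_2$ form a claw centred at $u$ — the final contradiction.

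Hence no vertex outside $C$ is adjacent to $C$, and the connectivity argument finishes the proof. I expect the main obstacle to be the bookkeeping in the interaction of the two obstructions: claw-freeness alone bounds the runs from below and caps their number but permits long gaps, and only the $hammer$ argument closes those gaps; it is the simultaneous use of both, together with the numerical slack from $n\ge 7$, that collapses every configuration. The delicate part is verifying all the required non-adjacencies, which reduce to cyclic-distance computations that hold precisely because $n\ge 7$ (the borderline case $n=6$ genuinely admits such an external vertex).
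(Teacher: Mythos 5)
Your proof is correct and takes essentially the same approach as the paper: both argue that no vertex outside the cycle can be adjacent to it, by decomposing its neighbourhood on $C_n$ into runs of consecutive vertices constrained by $K_{1,3}$-freeness and then bringing in $hammer$-freeness. The only organisational difference is that the paper first enumerates the claw-feasible neighbourhood patterns (one run of length two to four, or two runs of length two) and exhibits a $hammer$ in each, whereas you use the $hammer$ to force every gap to have length one and derive the final contradiction from a claw; both versions are complete.
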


\begin{proof} Assume opposite, i.e. there is a vertex $x\in
V(G)\setminus V(C_n)$. One can easily show that the vertex $x$ is
adjacent to at least one vertex of $C_n$. It is easy to verify that
the set of $x's$ neighbors in $C_n$ is constituted either by two,
three or four consecutive vertices or by two pairs of consecutive
vertices (otherwise $G\not \in Free(\{K_{1,3}\})$). In both
situations the graph $G$ contains $hammer$ as an induced subgraph.
Hence, the assumption was false.\end{proof}

\begin{lemma}
If some connected graph $G\in
Free(\{K_{1,3},hammer\})$ contains $C_6$ as an induced subgraph,
then $G\setminus V(C_6)$ is the disjoint union of at most three
cliques.
\end{lemma}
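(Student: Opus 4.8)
The plan is to control how the vertices of $G\setminus V(C_6)$ attach to a fixed copy of the cycle $C_6=v_1v_2v_3v_4v_5v_6$, and to show that they fall into at most three cliques determined by their trace on the cycle. First I would classify the trace $N(x)\cap V(C_6)$ of a vertex $x\notin V(C_6)$ adjacent to the cycle. Claw-freeness gives two restrictions. The trace contains no independent triple of $C_6$, for otherwise $x$ is the centre of an induced $K_{1,3}$; thus it omits at least one of $\{v_1,v_3,v_5\}$ and $\{v_2,v_4,v_6\}$. Moreover the trace has no isolated vertex inside $C_6$: if $x$ is adjacent to $v_i$ but to neither $v_{i-1}$ nor $v_{i+1}$, then $v_i$ is the centre of the claw on $\{v_{i-1},v_{i+1},x\}$. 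Hence the trace is a union of runs of consecutive cycle-vertices, each of length at least two, omitting one antipodal triple; a short count over the possible run-lengths (three runs would need at least nine vertices, while a single run of five or six contains an antipodal triple) leaves exactly four shapes: one run of two, three or four vertices, or two opposite runs of two vertices.

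Next I would invoke hammer-freeness to discard the first three shapes, which is the technical heart of the argument and the precise place where $C_6$ behaves unlike the longer cycles of Lemma 4. Relabelling so that the run starts at $v_1v_2$, the triangle $\{x,v_1,v_2\}$ together with the cycle-path $v_1,v_6,v_5$ induces a hammer: $v_5$ and $v_6$ lie outside a run of length at most four and so are non-adjacent to $x$, while $v_6$ is non-adjacent to $v_2$ and $v_5$ is non-adjacent to both $v_1$ and $v_2$, so $\{v_1,x,v_2;v_6,v_5\}$ carries exactly the hammer edges. For the two-opposite-runs shape the analogous tail would terminate at a cycle-vertex that is itself adjacent to $x$, so no hammer is produced and this shape survives. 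Consequently the trace of every vertex outside the cycle is one of the three opposite-edge pairs $D_1=\{v_1,v_2,v_4,v_5\}$, $D_2=\{v_2,v_3,v_5,v_6\}$, $D_3=\{v_1,v_3,v_4,v_6\}$; let $X_1,X_2,X_3$ be the corresponding vertex sets.

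It then remains to assemble the three cliques, which I would do with three more claw arguments and a domination step. If two vertices of the same class, say with trace $D_1$, were non-adjacent, then together with $v_6$ they would form an independent triple in the neighbourhood of $v_1$, a claw; hence each $X_i$ is a clique. If a vertex of $X_1$ were adjacent to a vertex $x'$ of $X_2$, then $v_1$ and $v_4$ (which avoid $D_2$) would be non-adjacent to $x'$ and to each other, giving a claw centred at the $X_1$-vertex; the rotational symmetry of the three types then rules out edges between any two classes. Finally, a vertex $y$ with trace $D_1$ cannot have a neighbour that misses both $v_1$ and $v_4$, since that would again be a claw on $\{v_1,v_4,\cdot\}$ at $y$; in particular every neighbour of $y$ meets $C_6$. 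Walking backwards along a shortest path from a hypothetical vertex non-adjacent to $C_6$ towards the cycle then forces each vertex of the path to meet $C_6$, a contradiction, so $C_6$ dominates $G$. Combining these facts, $G\setminus V(C_6)=X_1\oplus X_2\oplus X_3$ is a disjoint union of at most three cliques. I expect the hammer elimination of the single-run shapes to be the main obstacle, as it is exactly the step that distinguishes $C_6$ from the cycles handled in Lemma 4.
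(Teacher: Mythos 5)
Your proof is correct and follows essentially the same route as the paper: the paper likewise asserts (leaving the claw/hammer case analysis as "easy to verify") that the trace of every vertex outside the cycle induces a $2K_2$ in $C_6$, and then observes that two outside vertices are adjacent if and only if their traces coincide, which yields the partition into at most three cliques. Your write-up merely makes explicit the run-length enumeration, the hammer that kills the single-run traces, and the domination of $G$ by the cycle, all of which the paper leaves implicit.
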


\begin{proof} Let us consider the set $V=V(G)\setminus V(C_6)$. It is
easy to verify that the intersection of the neighborhood of each
vertex in $V$ with $C_6$ induces in $G$ the subgraph $2K_2$. Let us
consider now two arbitrary vertices in $V$. If they are adjacent,
then they have in $C_6$ the same sets of neighbors and if they are
not adjacent, then the mentioned sets are distinct. This implies
that $V$ does not contain four pairwise nonadjacent vertices. Thus,
$G\setminus V(C_6)$ is the disjoint union of at most three cliques.
\end{proof}

\begin{lemma}
For any connected graph $G\in Free(\{K_{1,3},hammer\})$ at least one
of the following properties is true:

\begin{itemize}
\item $G$ is a simple cycle

\item $G$ contains the induced subgraph $C_6$

\item $G$ has a pendant vertex

\item $G$ belongs to the class $Free(\{P_5\})$

\item $G$ belongs to the class $[Free(\{O_3\})]_5$
\end{itemize}
\end{lemma}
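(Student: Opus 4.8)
The plan is to argue by cases according to how a connected $\{K_{1,3},hammer\}$-free graph $G$ relates to induced cycles, using Lemmas 3, 4 and 5 as the main engine. First I would dispose of the graphs having long induced cycles: if $G$ contains some induced $C_n$ with $n\geq 7$, then Lemma 3 forces $G\cong C_n$, so the first alternative (``$G$ is a simple cycle'') holds. Thus for the rest of the argument I may assume $G$ has no induced cycle of length at least seven. If $G$ contains an induced $C_6$, the second alternative holds immediately, so I may further assume $G$ is $C_6$-free. This reduces the problem to connected $\{K_{1,3},hammer\}$-free graphs whose induced cycles all have length in $\{3,4,5\}$.

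Next I would handle the remaining cycle lengths, with the goal of placing $G$ either in $Free(\{P_5\})$ or in $[Free(\{O_3\})]_5$. The key observation to exploit is that Lemma 2 was proved for the class $Free(\{K_{1,3},P_5\})$, and its conclusion ``$G\in[Free(\{O_3\})]_5$'' required only that $G$ be connected, $K_{1,3}$-free, and contain an induced $C_4$ or $C_5$; the proof there never used $hammer$-freeness, only $P_5$-freeness to derive contradictions. So the natural split is: either $G$ is already $P_5$-free, in which case the fourth alternative holds, or $G$ contains an induced $P_5$. In the latter case I would look for an induced $C_4$ or $C_5$; if one is present, the argument of Lemma 2 (re-run verbatim, since its hypotheses $K_{1,3}$-freeness and $P_5$-freeness-derived contradictions carry over) shows $G\in[Free(\{O_3\})]_5$, giving the fifth alternative. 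The delicate remaining subcase is therefore $G$ containing an induced $P_5$ but no induced $C_4$, $C_5$, or $C_6$ and no induced cycle of length $\geq 7$, so all induced cycles are triangles.

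For that last subcase the plan is to show $G$ must have a pendant vertex, yielding the third alternative. Here I would use $hammer$-freeness directly: take an induced $P_5$, say on vertices $p_1p_2p_3p_4p_5$, and examine how the rest of the graph attaches to its endpoints. Since $G$ is $K_{1,3}$-free, the neighborhoods of the path's vertices are constrained, and since the only induced cycles are triangles, any chord-producing attachment must create triangles rather than longer holes; the $hammer$ obstruction (a triangle with a pendant path of length two) then sharply limits how a triangle can connect to a path. I would argue that an endpoint of the $P_5$, or a vertex forced into the structure by these constraints, ends up with a single neighbor, i.e. is pendant.

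The hard part will be this final subcase: proving the existence of a pendant vertex when $G$ is chordal-like (all holes are triangles) yet contains an induced $P_5$ and is $\{K_{1,3},hammer\}$-free. The difficulty is that ruling out $K_{1,3}$ and $hammer$ simultaneously while forbidding all holes of length $\geq 4$ leaves a narrow but nontrivial family, and one must carefully track neighborhoods along the $P_5$ to force a degree-one vertex rather than merely a low-degree one. I expect the bulk of the work to be a finite but careful case analysis of the adjacencies between $V(G)\setminus\{p_1,\dots,p_5\}$ and the path, each case closing by exhibiting either a forbidden $K_{1,3}$ or $hammer$, a forbidden larger hole, or an explicit pendant vertex.
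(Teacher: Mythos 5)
Your overall case split (long holes via Lemma 3, then $C_6$, then $P_5$-freeness, then the rest) is reasonable, but the pivotal step is broken. In the subcase where $G$ contains an induced $P_5$ together with an induced $C_4$ or $C_5$, you propose to re-run the argument of Lemma 2 ``verbatim.'' That argument cannot carry over: every contradiction in the proof of Lemma 2 is obtained by exhibiting an induced $P_5$ (this is how the cycle is shown to dominate $G$, and how each configuration of three pairwise nonadjacent vertices outside the cycle is eliminated). In your subcase $G$ is explicitly assumed to contain an induced $P_5$, so exhibiting one is no contradiction at all, and the conclusion $G\in[Free(\{O_3\})]_5$ does not follow from connectivity, $K_{1,3}$-freeness and the presence of a short hole alone. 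To salvage this you would have to replace every $P_5$-based contradiction by one based on $hammer$-freeness, which is a new argument, not a citation of Lemma 2.

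The remaining subcase (all induced cycles are triangles, yet an induced $P_5$ is present, and you must produce a pendant vertex) is only a sketch; you acknowledge it is the hard part but give no actual case analysis, and it is not evident that ``pendant vertex'' is the alternative that must hold there rather than, say, membership in $[Free(\{O_3\})]_5$. The paper avoids both difficulties by organizing the whole proof around a longest induced path $P_n$ ($n\geq 5$, which exists once $P_5$-freeness fails) rather than around induced cycles: an end of the path is either pendant or has a neighbor $x$ off the path; maximality forces $x$ to have at least two neighbors on the path; if these are only the two ends one gets a long hole (hence a simple cycle by Lemma 3) or a $C_6$; otherwise $K_{1,3}$-freeness restricts the trace of $x$ on the path so that for $n>5$ a $hammer$ always appears, and for $n=5$ either $G\setminus V(P_5)$ has no independent triple (so $G\in[Free(\{O_3\})]_5$) or an independent triple forces a $C_6$ or a claw. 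You would need to supply arguments of comparable substance for your two unfinished subcases before this can count as a proof.
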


\begin{proof} Assume that $G\not \in Free(\{P_5\})$. Let us consider
an induced path $P_n$ of $G$ having the maximal length. Clearly,
$n\geq 5$. Let us consider an arbitrary end of this path. One can
assume that it is adjacent to some vertex $x\in V(G)\setminus
V(P_n)$, otherwise $G$ contains a pendant vertex. By the maximality
of $P_n$ the vertex $x$ is adjacent to at least two vertices of the
path. One can consider that $x$ is adjacent to at least one interior
vertex of $P_n$, otherwise $G$ is a simple cycle (by Lemma 3) or it
contains $C_6$ as an induced subgraph.

Let $n>5$. To avoid induced $K_{1,3}$ the vertex $x$ must be
adjacent to three or four consecutive vertices of $P_n$ or to two
end its vertices or to three vertices of $P_n$ that induce the
subgraph $K_2\oplus K_1$ in $G$ or to four vertices inducing $2K_2$.
The graph $G$ contains $hammer$ as an induced subgraph in all these
situations.

Let $n=5$ now. One can assume that the graph $G\setminus V(P_5)$ has
three pairwise nonadjacent vertices (otherwise $G\in
[Free(\{O_3\})]_5$). It is easy to check that any of these three
vertices must be adjacent to either three central vertices of $P_5$
or to all its vertices, except central or to the first, the third
and the fourth vertices of $P_5$ (counting from some of the $P_5$'s
ends) or to the first and the last its vertices. The graph $G$
contains $C_6$ as an induced subgraph in the last case. Hence, we
can consider that no one among the three vertices is adjacent to
only the ends of $P_5$. If one of the three vertices is adjacent to
the first, the third and the fourth vertices of $P_5$ and other of
these vertices is adjacent to the second, the third and the fifth
ones, then $G$ contains induced $C_6$. Therefore, one can assume
that there are no such two vertices. Either the second or the fourth
vertex of $P_5$ is adjacent to the three vertices and, hence, $G$ is
not $K_{1,3}$-free. Thus, the initial assumption was false.
\end{proof}

\section{On formulae connecting the chromatic numbers of a graph and of its
induced subgraphs}

The following statement is obvious.

\begin{lemma}
If $G$ is a connected graph with at most three vertices
and a pendant vertex $v$, then $\chi(G\setminus \{v\})=\chi(G)$.
\end{lemma}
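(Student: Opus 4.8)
The plan is to prove Lemma 6, which asserts that if $G$ is a connected graph with at most three vertices and a pendant vertex $v$, then $\chi(G\setminus\{v\})=\chi(G)$. Since the number of vertices is bounded by three and $G$ is connected with a pendant vertex, the scope is extremely small, so the natural approach is a direct exhaustive case analysis rather than any structural argument.

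First I would observe that a connected graph with a pendant vertex must have at least two vertices, so the only cases are $|V(G)|=2$ and $|V(G)|=3$. In the two-vertex case, $G=K_2=P_2$ and $v$ is one of its vertices, so $G\setminus\{v\}=K_1$; here $\chi(G)=2$ while $\chi(G\setminus\{v\})=1$, which would actually violate the claimed equality. This signals that the intended reading is $|V(G)|\geq 3$ (or that the pendant vertex together with connectivity forces the relevant configuration), and I expect the genuinely meaningful case to be $|V(G)|=3$, where $G=P_3$ is the unique connected graph on three vertices possessing a pendant vertex.

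Next, for $G=P_3$ with $v$ a pendant (end) vertex, I would compute both chromatic numbers directly. Removing an end vertex of $P_3$ leaves $P_2=K_2$, so $\chi(G\setminus\{v\})=2$. On the other hand $P_3$ is bipartite and contains an edge, so $\chi(P_3)=2$ as well, giving the desired equality $\chi(G\setminus\{v\})=\chi(G)=2$. The essential point driving the result is that $P_3$ and $P_2$ have the same chromatic number because neither contains a triangle and both contain at least one edge, so both need exactly two colors.

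The main obstacle here is not computational but interpretive: deciding exactly which tiny configurations the hypothesis is meant to cover, since the literal two-vertex reading contradicts the statement. Because the author calls the lemma \emph{obvious}, I would resolve this by restricting attention to the connected three-vertex graph with a pendant vertex, namely $P_3$, verify the single equality $\chi(P_3\setminus\{v\})=\chi(P_3)$ by inspection, and conclude. No deeper machinery from the earlier lemmas is required; the proof is a one-line verification once the correct configuration is identified.
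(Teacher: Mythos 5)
The paper offers no argument for this lemma at all --- it is introduced with the single remark that the statement is obvious --- so there is no written proof to compare yours against; what matters is whether your verification is sound and whether it supports the use the paper makes of the lemma. Your central observation is correct and worth making explicit: as literally stated the lemma is false, since $G=K_2$ is connected, has at most three vertices and a pendant vertex $v$, yet $\chi(K_2)=2$ while $\chi(K_1)=1$. This is a typo in the paper: ``at most three vertices'' should read ``at least three vertices''. One can see this from the role the lemma plays in the proof of Lemma 9, where (via Lemma 5) pendant vertices are stripped from connected $\{K_{1,3},hammer\}$-free graphs of arbitrary size without changing the chromatic number; a version confined to three-vertex graphs would be useless there.

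This is where your proposal falls short. Having correctly diagnosed the intended hypothesis as $|V(G)|\geq 3$, you then verify only the single case $|V(G)|=3$, i.e.\ $G=P_3$. That does prove the literal statement for every case in which it is true, but it does not establish the lemma the paper actually needs. The general argument is still a one-liner and you should state it: if $G$ is connected with $|V(G)|\geq 3$ and $v$ is pendant with unique neighbor $u$, then $G\setminus \{v\}$ is connected with at least two vertices, hence contains an edge and satisfies $\chi(G\setminus \{v\})\geq 2$; any proper coloring of $G\setminus \{v\}$ with $k\geq 2$ colors therefore has a color different from the one on $u$, which can be assigned to $v$, giving $\chi(G)\leq \chi(G\setminus \{v\})$; the reverse inequality holds because $G\setminus \{v\}$ is an induced subgraph of $G$. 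Your computation for $P_3$ is just the special case of exactly this argument.
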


\begin{lemma}
Let $G$ be a connected graph in $Free(\{P_5,C_4\})$ that contains
induced $C_5$. Let $V_1$ be the set its vertices that adjacent to
all vertices of $C_5$, $V_2$ be the set of vertices in $G$ that have
three neighbors in $C_5$, $G_1$ and $G_2$ be the subgraphs of $G$,
induced by $V(G)\setminus (V_1\cup V_2\cup V(C_5))$ and $V_1\cup
V_2\cup V(C_5)$ correspondingly. Then, $G_2$ is $O_3$-free and the
relation $\chi(G)=\max(\chi(G_1),\chi(G_2))$ holds.
\end{lemma}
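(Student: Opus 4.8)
The plan is to read off the local structure of $G$ around the fixed induced cycle $C_5=c_1c_2c_3c_4c_5$, and then to convert that structure into the coloring identity. First I would classify the neighbourhoods $N(x)\cap V(C_5)$ of the vertices $x\notin V(C_5)$. Using $C_4$-freeness, a vertex adjacent to two non-consecutive cycle vertices but not to the vertex between them closes an induced $C_4$, so every such neighbourhood consists of consecutive vertices of $C_5$. Using $P_5$-freeness I would then discard the neighbourhoods of size $1$, $2$ and $4$: in each case the vertex together with three or four consecutive cycle vertices induces $P_5$ (for instance, a vertex $x$ adjacent only to $c_1,c_2$ gives the induced path $c_5c_4c_3c_2x$). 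Hence the only surviving types are the empty neighbourhood, exactly three consecutive vertices (these form $V_2$), and all five vertices (these form $V_1$); in particular $V(G_1)$ is precisely the set of vertices with no neighbour on $C_5$.

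Next I would fix the join structure. Two vertices of $V_1$, or a vertex of $V_1$ and a vertex of $V_2$, that were non-adjacent would have two non-consecutive common neighbours on $C_5$ and thus close an induced $C_4$; therefore $V_1$ is a clique and is complete to $C_5\cup V_2$, i.e.\ every vertex of $V_1$ is universal in $G_2$. The $O_3$-freeness of $G_2$ follows: an independent triple must avoid $V_1$ entirely (one $V_1$-vertex already dominates $G_2$), and inside $C_5\cup V_2$ a short case analysis gives $\alpha(G_2)\le 2$ --- three cycle vertices are never independent; a $V_2$-vertex misses only two \emph{consecutive} cycle vertices, so it cannot join two independent cycle vertices; and two $V_2$-vertices whose neighbourhood-intervals overlap in two vertices are forced adjacent, since otherwise their two private cycle-neighbours together with a further cycle vertex induce $P_5$.

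For the chromatic identity the inequality $\chi(G)\ge\max(\chi(G_1),\chi(G_2))$ is immediate because $G_1,G_2$ are induced subgraphs. For the reverse direction I would use the interface description obtained above: a vertex of $G_1$ is never adjacent to a vertex of $C_5$ (it has the empty type) and never to a vertex of $V_2$ (otherwise it would sit on an induced $P_5$ reaching two cycle vertices), so all edges between $V(G_1)$ and $V(G_2)$ run to $V_1$. Thus $V_1$ is a clique separating $V(G_1)$ from $V(C_5)\cup V_2$ in $G$. Applying the standard clique-cutset rule for the chromatic number yields $\chi(G)=\max\bigl(\chi(G[V(G_1)\cup V_1]),\chi(G_2)\bigr)$, the two optimal colorings being matched on the clique $V_1$ by a permutation of colors.

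The remaining step, which I expect to be the main obstacle, is to show that attaching the clique $V_1$ to $G_1$ does not push the chromatic number past the target, i.e.\ $\chi(G[V(G_1)\cup V_1])\le\max(\chi(G_1),\chi(G_2))$. Here I would color $G_1$ optimally, note that the $t:=\chi(C_5\cup V_2)\ge 3$ colors used on $C_5\cup V_2$ are available to every vertex of $G_1$ (no $G_1$-vertex meets that part of $G_2$), and then try to seat the clique $V_1$ within the remaining budget, giving each $V_1$-vertex a color different from its already-colored $G_1$-neighbours and from the rest of $V_1$. Making this choice simultaneously consistent over the whole clique is essentially a system-of-distinct-representatives condition, and it is exactly the point where $\{P_5,C_4\}$-freeness must be invoked to restrict how $V_1$ attaches to $G_1$, so that no clique of $G_1$ completely joined to $V_1$ forces more than $\chi(G_2)$ colors. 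Securing this interface bound is the crux, and it is where I would expect to need an additional structural observation beyond the neighbourhood classification.
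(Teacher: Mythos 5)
Your structural analysis coincides with the paper's: every vertex with a neighbour on the cycle lies in $V_1$ or $V_2$, vertices of $V_2$ have no neighbours outside $V(C_5)\cup V_1\cup V_2$, each vertex of $V_1$ is universal in $G_2$, hence $V_1$ is a clique cutset separating $V(G_1)$ from $V(C_5)\cup V_2$; your verification that $G_2$ is $O_3$-free also goes through. But the difficulty you isolate at the end is not a missing structural observation --- it is a genuine obstruction to the statement read literally. Take $G$ to be $C_5$ plus one vertex $v$ adjacent to all five cycle vertices, plus five mutually adjacent new vertices each joined to $v$ and to nothing else. This $G$ is connected and $\{P_5,C_4\}$-free ($v$ is adjacent to all other vertices, so it cannot lie on an induced $P_5$ or $C_4$, and $G\setminus\{v\}=C_5\oplus K_5$ contains neither), yet with the definitions as printed $G_1=K_5$ and $G_2$ is the $5$-wheel, so $\max(\chi(G_1),\chi(G_2))=5$ while $\chi(G)\geq 6$ because the five new vertices together with $v$ form a clique. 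Consequently the bound $\chi(G[V(G_1)\cup V_1])\leq\max(\chi(G_1),\chi(G_2))$ that you correctly flag as the crux cannot be secured by any further case analysis.

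The resolution sits in the paper's own proof, which speaks of the colors of $c_1$ ``that do not appear in $V_1$'' and therefore tacitly takes $G_1$ to be induced by $V(G)\setminus(V_2\cup V(C_5))$, i.e.\ with $V_1\subseteq V(G_1)$. Under that reading your third paragraph already finishes the argument: the identity $\chi(G)=\max\bigl(\chi(G[V(G_1)\cup V_1]),\chi(G_2)\bigr)$ \emph{is} the assertion of the lemma, and it follows from the clique-cutset rule --- permute colors so that optimal colorings of the two sides agree on the clique $V_1$, or equivalently, as the paper phrases it, reuse for $V(C_5)\cup V_2$ any $\chi(G_2)-|V_1|$ colors of $c_1$ avoiding $V_1$, which exist because $\chi(G_2)=|V_1|+\chi(G[V(C_5)\cup V_2])$ by universality of $V_1$ in $G_2$. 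So your route is the paper's route; the step you could not close should be deleted rather than completed, with the definition of $G_1$ adjusted to include $V_1$ (this does not disturb the application in Lemma 9, since $|V(G)|-|V(G_1)|\geq 5$ still holds).
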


\begin{proof} Any vertex outside $C_5$ that adjacent to at least one vertex
of the cycle must be adjacent to either all vertices of the cycle or
to three consecutive its vertices. It is easy to verify taking into
account that $G$ is $\{P_5,C_4\}$-free. Therefore, any such a vertex
belongs to either $V_1$ or $V_2$. Each vertex in $V_2$ has no a
neighbor outside $V(C_5)\cup V_1\cup V_2$ (since $G\in
Free(\{P_5\})$). As $G$ is $C_4$-free, then any vertex in $V_1$ is
adjacent to every vertex in $V_1\cup V_2\cup V(C_5)$ except itself.
It is easy to verify that $G_2$ is $O_3$-free.

The inequality $\chi(G)\geq \max(\chi(G_1),\chi(G_2))$ is obvious.
We will show that $G$ can be colored with
$\max(\chi(G_1),\chi(G_2))$ colors. Let $c_1$ and $c_2$ be optimal
colorings of $G_1$ and $G_2$ correspondingly. If $\chi(G_1)\geq
\chi(G_2)$, then $c_1$ has $\chi(G_1)-|V_1|\geq\chi(G_2)-|V_1|\geq
0$ colors that do not appear in $V_1$. Hence, $c_1$ can be extended
to a proper coloring of $G$ with $\chi(G_1)$ colors by coloring
$G_2\setminus V_1$ with $\chi(G_2)-|V_1|$ colors of the mentioned
type. By the same reasons $c_2$ is extendable to a proper coloring
of $G$ with $\chi(G_2)$ colors when $\chi(G_2)\geq \chi(G_1)$.
\end{proof}

\section{Some results on polynomial-time solvability of the coloring problem}

\begin{lemma}
Let ${\mathcal X}$ be an easy case for the coloring problem and for
some number $p$ the inclusion ${\mathcal X}\subseteq Free(\{O_p\})$
holds. Then, for any fixed $q$ this problem is polynomial-time
solvable in the class $[{\mathcal X}]_q$.
\end{lemma}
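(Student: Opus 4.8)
The plan is to reduce the computation of $\chi(G)$ for a graph $G\in[\mathcal{X}]_q$ to polynomially many ordinary coloring computations on induced subgraphs that still lie in $\mathcal{X}$, using the fact that $\mathcal{X}\subseteq Free(\{O_p\})$ bounds the size of every independent set. First I would fix a set $S\subseteq V(G)$ with $|S|\le q$ and $G\setminus S\in\mathcal{X}$; such a set can be located by testing all $O(n^q)$ subsets of at most $q$ vertices (for the classes to which the lemma is applied, membership in $\mathcal{X}$ is polynomially recognizable, and in general it is enough to verify the colorings produced below, so that an incorrect choice of $S$ can never yield a proper coloring of $G$ with fewer than $\chi(G)$ colors). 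Since $G\setminus S\in Free(\{O_p\})$, the graph $G\setminus S$ and all of its induced subgraphs contain at most $p-1$ pairwise nonadjacent vertices.

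The key observation I would isolate is the following. In any proper coloring of $G$ at most $q$ distinct colors occur on $S$; call these colors \emph{special}. For each special color $t$, the vertices of $G\setminus S$ that receive $t$ form an independent set of $G\setminus S$, so there are at most $p-1$ of them. Hence the set $W$ of \emph{all} vertices of $G\setminus S$ that get a special color satisfies $|W|\le q(p-1)$, a constant for fixed $p$ and $q$. This is the crucial point that keeps the whole enumeration polynomial, and I expect it to be the heart of the proof.

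Next I would describe the algorithm. I would enumerate every proper coloring of $G[S]$ (boundedly many, as $|S|\le q$); each one fixes a set $T$ of special colors with $|T|\le q$. For each such coloring I would enumerate every subset $W\subseteq V(G)\setminus S$ with $|W|\le q(p-1)$ and every assignment of colors from $T$ to $W$ — there are $O(n^{q(p-1)})$ such choices — retaining only those for which $G[S\cup W]$ is properly colored and each vertex of $W$ avoids the special colors carried by its neighbours in $S$. For a retained partial coloring, the remaining vertices $R=V(G)\setminus(S\cup W)$ may use non-special colors only; since no non-special color is forbidden at any vertex of $R$ (forbidden colors can arise solely from neighbours in $S$, which carry special colors), the cheapest way to extend the partial coloring needs exactly $\chi(G[R])$ further colors. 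As $G[R]$ is an induced subgraph of $G\setminus S\in\mathcal{X}$, it again lies in $\mathcal{X}$, so $\chi(G[R])$ is computable in polynomial time; each branch thereby yields a proper coloring of $G$ with $|T|+\chi(G[R])$ colors. I would output the minimum of $|T|+\chi(G[R])$ over all branches.

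To verify correctness I would show that this minimum equals $\chi(G)$. The inequality $\chi(G)\le|T|+\chi(G[R])$ for every branch is immediate, since each branch produces an honest proper coloring of $G$. For the reverse inequality I would start from an optimal coloring of $G$, read off its sets $T^{*}$, $W^{*}$, $R^{*}$ of special colors, of special-coloured vertices outside $S$, and of the remaining vertices, observe (via the key observation, which bounds $|W^{*}|\le q(p-1)$) that this configuration is one of the enumerated branches, and note that the coloring restricted to $R^{*}$ uses only $\chi(G)-|T^{*}|$ non-special colors, so that $\chi(G[R^{*}])\le\chi(G)-|T^{*}|$. Since only polynomially many branches are examined and each requires a single polynomial-time coloring computation in $\mathcal{X}$, the procedure runs in polynomial time. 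The main obstacle is exactly the bound $|W|\le q(p-1)$: it is the hypothesis $\mathcal{X}\subseteq Free(\{O_p\})$ that confines the vertices coloured by the few colors seen on the deleted set to a constant-size set, turning an a priori exponential analysis over all colorings of $S$ and their extensions into a polynomial enumeration, while the reduction of the extension step to an ordinary coloring of $G[R]\in\mathcal{X}$ is what lets the easiness of $\mathcal{X}$ be invoked; the only secondary technicality is the recognition of $\mathcal{X}$, handled above.
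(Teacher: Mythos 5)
Your proposal is correct and follows essentially the same route as the paper: both enumerate all partial proper colorings of $G$ that use at most $q$ color classes and cover the deleted set, use $\mathcal{X}\subseteq Free(\{O_p\})$ to bound the number of colored vertices by a constant (hence the number of such partial colorings by a polynomial), and then minimize the number of used colors plus $\chi$ of the uncolored remainder, which lies in $\mathcal{X}$ by heredity. Your write-up merely adds explicit detail on locating the deletion set $S$ and on the two-sided verification of the minimum.
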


\begin{proof} Let $G$ be a graph in $[{\mathcal X}]_q$. Deleting some set
$V$ ($|V|\leq q$) of its vertices leads to a graph in ${\mathcal
X}$. We will consider all partial proper colorings of $G$ with at
most $|V|$ color classes, in which every vertex of $V$ is colored.
Obviously, any such a coloring has at most $(p-1)q$ colored
vertices. Hence, the quantity of the colorings is bounded by a
polynomial on $|V(G)|$. For any considered partial coloring deleting
all colored vertices leads to a graph in ${\mathcal X}$ and its
chromatic number is computed in polynomial time. For every our
partial coloring we will find the sum of the number of used colors
and the chromatic number of the subgraph induced by the set of
uncolored vertices. Minimal among these sums is equal to $\chi(G)$.
Thus, $\chi(G)$ is computed in polynomial time.\end{proof}

A graph is called {\it chordal} if it does not contain induced
cycles with four and more vertices. The coloring problem is known to
be polynomial-time solvable for chordal graphs (Golumbic 1980).

\begin{lemma}
The classes $Free(\{K_{1,3},P_5\}), Free(\{K_{1,3},hammer\}),
Free(\{P_5,C_4\})$ are easy for the coloring problem.
\end{lemma}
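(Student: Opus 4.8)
The plan is to prove polynomial-time solvability separately for each of the three classes, invoking the structural lemmas already established (Lemmas 2--8) together with Lemma 9, whose strategy is to reduce each class, via the removal of a bounded number of vertices, to a base class in which the coloring problem is already known to be tractable. The two base classes I intend to exploit are the chordal graphs (tractable by Golumbic) and the class $Free(\{O_3\})$ of graphs whose complement is triangle-free, equivalently graphs with no three pairwise nonadjacent vertices; a graph in $Free(\{O_3\})$ has an independence number at most two, so its complement is triangle-free and its chromatic number can be computed in polynomial time (for instance by a matching-based argument in the complement). Crucially, $Free(\{O_3\})\subseteq Free(\{O_3\})$ trivially satisfies the hypothesis of Lemma 9 with $p=3$, so Lemma 9 applies to give tractability of every class $[Free(\{O_3\})]_q$ for fixed $q$.

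First I would handle $Free(\{K_{1,3},P_5\})$. The coloring problem can be solved component by component, so I reduce to the connected case. If a connected graph $G\in Free(\{K_{1,3},P_5\})$ contains an induced cycle of length at least four, then by Lemma 2 we have $G\in[Free(\{O_3\})]_5$, and Lemma 9 (with ${\mathcal X}=Free(\{O_3\})$, $p=3$, $q=5$) finishes this subcase. Otherwise $G$ has no induced cycle of length four or more, i.e. $G$ is chordal, and the chordal algorithm applies. Thus every connected graph in the class falls into one of two tractable regimes.

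Next I would treat $Free(\{K_{1,3},hammer\})$, again reducing to connected graphs and applying the five-way dichotomy of Lemma 5. Each of the five alternatives must be shown tractable: if $G$ is a simple cycle, its chromatic number is immediate; if $G$ has a pendant vertex $v$, I would delete $v$, color $G\setminus\{v\}$ recursively, and extend the coloring to $v$ greedily (a pendant vertex needs only avoid one color, and one should check that $\chi$ is preserved or changes controllably, using an argument in the spirit of Lemma 6); if $G\in Free(\{P_5\})$ then it lies in the already-settled class $Free(\{K_{1,3},P_5\})$; if $G\in[Free(\{O_3\})]_5$ then Lemma 9 applies directly; and if $G$ contains an induced $C_6$ then Lemma 4 tells us $G\setminus V(C_6)$ is a disjoint union of at most three cliques, so $G\in[{\mathcal X}]_6$ for a class ${\mathcal X}$ of disjoint unions of at most three cliques, which is easy for coloring and contained in $Free(\{O_4\})$, letting Lemma 9 conclude. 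The main obstacle in this part is the pendant-vertex case, since a single pendant deletion may not by itself place $G$ into a base class; I expect to iterate the reduction, removing pendant vertices until the residual graph satisfies one of the other four conditions, and to argue that the chromatic number is recovered at each step.

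Finally I would dispatch $Free(\{P_5,C_4\})$. A graph in this class has no induced $P_5$ and no induced $C_4$, so the only induced cycles of length at least four it can contain are of length five (longer induced cycles contain induced $P_5$). If a connected $G$ in the class is chordal, the chordal algorithm applies. Otherwise $G$ contains an induced $C_5$, and I invoke Lemma 7: with $G_1$ and $G_2$ as defined there, $\chi(G)=\max(\chi(G_1),\chi(G_2))$ where $G_2$ is $O_3$-free and hence colorable in polynomial time, while $G_1$ is again a graph in $Free(\{P_5,C_4\})$ on fewer vertices, so recursion on $|V(G)|$ terminates. The one point requiring care is that the recursion on $G_1$ strictly decreases the vertex count (which holds since $V(C_5)\neq\emptyset$ is removed), guaranteeing polynomial running time overall; combining the three arguments establishes the lemma.
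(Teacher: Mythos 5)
Your proposal is correct and follows essentially the same route as the paper: Lemma 2 plus chordality for $Free(\{K_{1,3},P_5\})$, the five-way dichotomy of Lemma 5 combined with Lemmas 4 and 6 and the $[{\mathcal X}]_q$ reduction for $Free(\{K_{1,3},hammer\})$, and the recursion via Lemma 7 (with the $O_3$-free part handled by matching in the complement) for $Free(\{P_5,C_4\})$. The only slip is notational: the helper lemma on $[{\mathcal X}]_q$ that you repeatedly cite as ``Lemma 9'' is Lemma 8 of the paper, Lemma 9 being the statement you are proving.
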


\begin{proof} We will show that for every considered class the coloring problem is
polynomially reduced to the same problem for chordal graphs. This
fact implies the lemma. The problem is polynomial-time solvable in
$Free(\{O_3\})$, since it is equivalent to the matching problem. The
reduction for $\{K_{1,3},P_5\}$-free graphs follows from this
observation, Lemma 2 and Lemma 8.

Let $G$ be a graph in $Free(\{K_{1,3},hammer\})$ containing the
induced subgraph $C_6$. By Lemma 4, deleting vertices of this cycle
leads to a chordal $O_4$-free graph. Hence, by Lemma 8, $\chi(G)$ is
computed in polynomial time. Thus, by Lemma 5 and Lemma 6 the
coloring problem for the class is polynomially reduced to the same
problem for graphs in $Free(\{K_{1,3},P_5\})\cup [Free(\{O_3\})]_5$.
Hence, it is reduced to chordal graphs.

Let $G$ be a connected graph in $Free(\{P_5,C_4\})$ that is not
chordal. Hence, $G$ contains induced $C_5$. The graphs $G_1$ and
$G_2$ defined in the formulation of Lemma 7 are constructed in
polynomial time. Moreover, $G_2$ is $O_3$-free and
$|V(G)|-|V(G_1)|\geq 5$. Therefore, by Lemma 7 the considered
problem for $\{P_5,C_4\}$-free graphs is also polynomially reduced
to the same problem for chordal graphs.\end{proof}

\section{The main result and its corollaries}

The following theorem is the main result of the paper.

\begin{theorem}
Let $H_1$ and $H_2$ be some graphs. If there is a class ${\mathcal
Y}\in \{{\mathcal F},{\mathcal T'},co({\mathcal T})\}$ with either
$H_1,H_2\not \in {\mathcal Y}$ or $K_{1,4}\subseteq_iH_1$ and
$bull\subseteq_iH_2$ $($or vice versa$)$, then the coloring problem
is NP-complete for $Free(\{H_1,H_2\})$. It is polynomial-time
solvable in the class if at least one of the following properties
holds:

\begin{itemize}
\item $H_1\subseteq_iP_4$ or $H_2 \subseteq_iP_4$
\item $H_1\subseteq_iP_5$ or $H_2 \subseteq_iK_5$ $($or vice
versa$)$
\item $H_1\subseteq_iP_5$ or $H_2 \subseteq_igem$ $($or vice
versa$)$
\item $H_1\subseteq_iP_5$ or $H_2 \subseteq_iC_4$ $($or vice
versa$)$
\item $H_1\subseteq_iP_5$ or $H_2 \subseteq_iK_{1,3}$ $($or vice
versa$)$
\item $H_1\subseteq_iK_{1,4}$ or $H_2 \subseteq_ipaw$ $($or vice
versa$)$
\item $H_1\subseteq_ifork$ or $H_2 \subseteq_ipaw$ $($or vice
versa$)$
\item $H_1\subseteq_iK_{1,3}$ or $H_2 \subseteq_ihammer$ $($or vice
versa$)$
\end{itemize}
\end{theorem}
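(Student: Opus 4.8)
The plan is to derive both halves of the theorem from the results already proved, via the monotonicity of the $Free$ operator: if $H_i\subseteq_i G_i$ for $i=1,2$, then $Free(\{H_1,H_2\})\subseteq Free(\{G_1,G_2\})$, since a graph avoiding $H_i$ must avoid $G_i$ whenever $H_i\subseteq_i G_i$. Consequently, to prove NP-completeness it suffices to exhibit a known hard class $Free(\{A,B\})$ with $A\subseteq_i H_1$ and $B\subseteq_i H_2$, so that $Free(\{A,B\})\subseteq Free(\{H_1,H_2\})$ displays a hard subclass; and to prove polynomial-time solvability it suffices to exhibit a known easy class $Free(\{A,B\})$ with $H_1\subseteq_i A$ and $H_2\subseteq_i B$, so that $Free(\{H_1,H_2\})\subseteq Free(\{A,B\})$ is a subclass of an easy class. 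I would state this reduction principle once and then apply it bullet by bullet, with the ``or vice versa'' always supplying the symmetric assignment of $H_1,H_2$ to $A,B$.

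For the NP-completeness part I would treat the two sufficient conditions separately. Suppose first that $H_1,H_2\notin {\mathcal Y}$ for some ${\mathcal Y}\in\{{\mathcal F},{\mathcal T'},co({\mathcal T})\}$. Since ${\mathcal Y}$ is hereditary, $H_i\notin{\mathcal Y}$ means no member of ${\mathcal Y}$ contains $H_i$ as an induced subgraph, so ${\mathcal Y}\subseteq Free(\{H_1,H_2\})$. As ${\mathcal Y}$ is a limit class, write ${\mathcal Y}=\bigcap_{k}{\mathcal X}_k$ with the ${\mathcal X}_k$ hard and decreasing. From $H_1,H_2\notin\bigcap_k{\mathcal X}_k$ I would choose, for each $i$, an index with $H_i\notin{\mathcal X}_{k_i}$, and set $k=\max(k_1,k_2)$; then $H_1,H_2\notin{\mathcal X}_k$, so ${\mathcal X}_k\subseteq Free(\{H_1,H_2\})$. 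Since ${\mathcal X}_k$ is hard, the superclass $Free(\{H_1,H_2\})$ is hard as well (when ${\mathcal Y}=co({\mathcal T})$ one may invoke Theorem 2 directly, as $co({\mathcal T})$ is boundary). Under the second condition, $K_{1,4}\subseteq_i H_1$ and $bull\subseteq_i H_2$ give $Free(\{K_{1,4},bull\})\subseteq Free(\{H_1,H_2\})$, and Lemma 1 makes the former NP-complete, hence so is the latter.

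For the polynomial-time part I would read each bullet as asserting $H_1\subseteq_i A$ and $H_2\subseteq_i B$ for two fixed small graphs $A,B$, so that $Free(\{H_1,H_2\})\subseteq Free(\{A,B\})$, and then identify $Free(\{A,B\})$ among the known easy classes. The eight cases split by source. The case $A=P_4$ is the monogenic criterion of Kral' et al.\ ($G\subseteq_i P_4$ or $G\subseteq_i P_3\oplus K_1$). The pairs $\{P_5,K_5\}$, $\{P_5,gem\}$, $\{K_{1,4},paw\}$ and $\{fork,paw\}$ are covered by the polynomial-time entries of Theorem 1: for $\{P_5,K_5\}$ one uses $P_5\subseteq_i P_5\oplus K_1$ to fit the $K_p$/$(P_5\oplus qK_1)$ entry; for $\{K_{1,4},paw\}$ and $\{fork,paw\}$ one checks that $K_{1,4}$ and $fork$ are forests on five vertices distinct from $K_{1,5}$. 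Finally the pairs $\{P_5,C_4\}$, $\{P_5,K_{1,3}\}$ and $\{K_{1,3},hammer\}$ are exactly the three new easy classes of Lemma 9. In each instance easiness passes down to the subclass $Free(\{H_1,H_2\})$.

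The bulk of the work is bookkeeping: for every bullet verifying that the correct containment direction holds and that the target class literally occurs (possibly after an induced-subgraph enlargement such as $P_5\subseteq_i P_5\oplus K_1$) among the easy classes of Theorem 1, Lemma 9, or the monogenic theorem. The only step carrying genuine content is the NP-completeness argument for the limit classes ${\mathcal F}$ and ${\mathcal T'}$: one must descend from ``${\mathcal Y}\subseteq Free(\{H_1,H_2\})$'' to a single finite stage ${\mathcal X}_k$ of the defining chain, which is exactly where the finiteness of $\{H_1,H_2\}$ enters. I expect this finite-stage selection, together with reading each polynomial bullet as a conjunction on $H_1$ and $H_2$, to be the main point to get right.
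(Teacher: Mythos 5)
Your proposal is correct and follows essentially the same route as the paper: the NP-completeness part rests on the limit-class property of ${\mathcal F}$, ${\mathcal T'}$, $co({\mathcal T})$ together with Lemma 1, and the polynomial part on the known easy classes collected in Theorem 1 (with $Free(\{P_4\})$ handled separately) combined with Lemma 9. You in fact make explicit two points the paper's proof leaves implicit --- the descent from a limit class to a single hard, hereditary stage ${\mathcal X}_k$ of its defining chain, and the reading of each polynomial bullet as a conjunction on $H_1,H_2$ --- and you handle both correctly.
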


\begin{proof} Let us recall that the classes ${\mathcal F},{\mathcal
T'},co({\mathcal T})$ are limit for the coloring problem. This fact,
Theorem 1 and Lemma 1 imply the first part of the statement. The set
of $P_4$-free graphs is well known to be an easy case for the
coloring problem (Courcelle and Olariu 2000). The classes
$Free(\{P_5,gem\})$ and $Free(\{P_5,K_5\})$ are easy for the problem
(Brandst\"{a}dt et al. 2002; Golovach and Paulusma 2013). The same
is true for $Free(\{fork,paw\})$ (Golovach and Paulusma 2013)  and
$Free(\{K_{1,4},paw\})$ (Kral' et al. 2001). These facts and Lemma 9
imply the second part of the theorem.\end{proof}

Both parts of Theorem 3 add new information about the complexity of
the coloring problem for some classes. For example, its complexity
status for the classes
$Free(\{K_{1,3},bull\}),Free(\{K_{1,3},P_5\}),
Free(\{K_{1,3},hammer\}), Free(\{P_5,C_4\})$ was open.

Theorem 3 gives the following criterion in the case of connected
$H_1$ and $H_2$ with at most four vertices.

\begin{corollary}
If $H_1$ and $H_2$ are connected graphs with at most four vertices,
then the coloring problem is polynomial-time solvable for
$\{H_1,H_2\}$-free graphs iff either $H_1\subseteq_i P_4$ or
$H_2\subseteq_iP_4$ or $\{H_1,H_2\}=\{K_{1,3},paw\}$ or
$\{H_1,H_2\}=\{K_{1,3},C_3\}$.
\end{corollary}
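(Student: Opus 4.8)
The plan is to derive Corollary 1 as a direct specialization of Theorem 3 to connected graphs on at most four vertices, by enumerating all such graphs and checking which pairs fall under the polynomial-time conditions versus the NP-completeness conditions of Theorem 3. First I would list the connected graphs on at most four vertices. Up to four vertices these are: $K_1$; $K_2$; $P_3$ and $K_3=C_3$; and on four vertices $P_4$, $K_{1,3}$, $paw$, $C_4$, $K_4-e$ (the diamond), and $K_4$. Every graph on at most three vertices is an induced subgraph of $P_4$ (since $K_3\subseteq_i paw\subseteq_i\ldots$ fails, but $P_3,K_1,K_2,C_3$ all embed — actually $C_3\not\subseteq_iP_4$, so I must track this carefully). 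The clean observation is that among four-vertex connected graphs, exactly $P_4\subseteq_iP_4$, while the others split into those containing a triangle ($paw,K_4-e,K_4,C_3$) and the triangle-free ones ($K_{1,3},C_4$).

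Next I would organize the proof around which member, if any, is an induced subgraph of $P_4$. If $H_1\subseteq_iP_4$ or $H_2\subseteq_iP_4$, the first bullet of Theorem 3 applies and the problem is polynomial, matching the first disjunct of the corollary. So I may assume neither $H_i$ embeds in $P_4$, which forces each $H_i\in\{K_{1,3},paw,C_4,K_4-e,K_4,C_3\}$ (the four-vertex graphs not in $P_4$, together with $C_3$, the only non-$P_4$-embeddable graph on $\le 3$ vertices). I would then argue that among these remaining pairs, the only polynomial cases are $\{K_{1,3},paw\}$ and $\{K_{1,3},C_3\}$. For the positive direction, $\{K_{1,3},paw\}$ is covered by the condition $H_1\subseteq_iK_{1,4}$ and $H_2\subseteq_ipaw$ (since $K_{1,3}\subseteq_iK_{1,4}$), and $\{K_{1,3},C_3\}$ is covered because $C_3\subseteq_ipaw$, so again $H_1\subseteq_iK_{1,4}$ and $H_2\subseteq_ipaw$ holds.

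For the NP-complete direction I would verify that every other pair drawn from $\{K_{1,3},paw,C_4,K_4-e,K_4,C_3\}$ triggers the first part of Theorem 3, i.e. either both graphs avoid some common class in $\{{\mathcal F},{\mathcal T'},co({\mathcal T})\}$, or one contains $K_{1,4}$ and the other $bull$ (the latter never occurs here since no four-vertex graph contains the five-vertex $K_{1,4}$ or $bull$, so I only use the class condition). The key sub-checks are: any pair of two triangle-containing graphs from $\{paw,K_4-e,K_4,C_3\}$ both contain $C_3\subseteq_iG_i$, so by the $C_p\subseteq_iG_1,C_q\subseteq_iG_2$ bullet of Theorem 1 (via the first part of Theorem 3) the problem is NP-complete; the pair $\{K_{1,3},C_4\}$ has $K_{1,3}\subseteq_iG_1$ and $C_4\subseteq_iG_2$, handled by the $K_{1,3}$-and-$C_p$ bullet; pairs pairing $K_{1,3}$ with $K_4$ or $K_4-e$ are handled by the $K_{1,3}$-and-$K_4$ bullet; pairs pairing $C_4$ with a triangle-graph are handled since $C_4\subseteq_iG_1$ and $C_3\subseteq_iG_2$; and the remaining self-pairs like $\{K_{1,3},K_{1,3}\}$ or $\{C_4,C_4\}$ reduce to the monogenic case where the single forbidden graph is neither in $P_4$ nor $P_3\oplus K_1$, giving NP-completeness.

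The main obstacle will be the bookkeeping: ensuring the enumeration of four-vertex connected graphs is exhaustive and that each of the finitely many pairs is correctly matched to exactly one applicable bullet of Theorem 3, with special care that the two claimed polynomial exceptions are genuinely not also NP-complete (consistency of Theorem 3) and that the membership facts like $K_{1,3}\in{\mathcal F}$, $C_4\notin{\mathcal F}$, and the triangle-containment relations are used correctly. I would present the argument as a short case split on how many of $H_1,H_2$ embed in $P_4$ and, when neither does, a finite table of the $\binom{6}{2}+6$ unordered pairs, citing the matching condition of Theorem 3 in each row; no genuine new mathematics is required beyond this matching.
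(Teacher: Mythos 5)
Your proposal is correct and is essentially the paper's own argument: the paper derives Corollary 1 simply by specializing Theorem 3 (and, through it, Theorem 1) to the ten connected graphs on at most four vertices, which is exactly the enumeration and pair-matching you carry out. Your bookkeeping checks out — the two exceptional pairs fall under the $K_{1,4}$/$paw$ polynomial condition, and every other pair avoiding $P_4$ is caught by an NP-completeness bullet — so no further ideas are needed.
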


Theorem 3 can not be applied to some pairs of connected graphs with
at most five vertices. If $\{H_1,H_2\}$ is such a set, then either
$H_1$ or $H_2$ belongs to $\{K_{1,3},fork,K_{1,4},P_5\}$. This
observation helps to enumerate all connected cases with at most five
vertices that the theorem does not cover.

\begin{corollary}

Theorem $3$ does not give the complexity status of the coloring
problem for the following sets of forbidden induced connected
subgraphs $($a number in the brackets shows the quantity of such
kind sets$):$

\begin{itemize}
\item $\{K_{1,3},G\}$, where $G\in \{bull,butterfly\}~(2)$
\item $\{fork,bull\}~(1)$
\item $\{P_5,G\}$, where $G\not \in \{K_5,gem\}$ is an
arbitrary connected five-vertex graph in $co({\mathcal T})~(10)$
\end{itemize}
\end{corollary}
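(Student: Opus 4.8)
The plan is to translate the two halves of Theorem 3 into membership conditions and then run a finite case analysis driven by the observation preceding the statement. A set $\{H_1,H_2\}$ escapes Theorem 3 exactly when none of its eight polynomial-time conditions applies and, simultaneously, its NP-completeness criterion fails; the latter means that each of $\mathcal{F},\mathcal{T'},co(\mathcal{T})$ contains at least one of $H_1,H_2$, and that it is not the case that one graph contains $K_{1,4}$ while the other contains $bull$. So the first step is to tabulate, for every connected graph on at most five vertices, whether it lies in $\mathcal{F}$, in $\mathcal{T'}$ and in $co(\mathcal{T})$. Membership in $\mathcal{F}$ is trivial, since a connected forest is a tree. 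Membership in $\mathcal{T'}$ is decided by recognising line graphs of subcubic forests (an induced $K_{1,3}$ obstructs this, giving at once $K_{1,3},K_{1,4},fork\notin\mathcal{T'}$), and membership in $co(\mathcal{T})$ by complementing and testing whether the result is the line graph of a forest from $\mathcal{S}$; small forbidden configurations in the complement, namely an induced $C_4$, $K_4-e$ or $K_4$, rule this out and yield $P_5,fork,K_{1,4},hammer,butterfly\notin co(\mathcal{T})$, whereas $bull$ is self-complementary and lies in $\mathcal{T'}\cap co(\mathcal{T})$.

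The reduction to the four pivots is then clean. If some $H_i$ is an induced subgraph of $P_5$ but not equal to it, then $H_i\subseteq_iP_4$ and the first tractability condition classifies the pair; hence if such an $H_i$ exists and the pair is unresolved, that graph must be $P_5$ itself. In the complementary situation neither graph embeds in $P_5$, so the first five conditions are vacuous; but the failure of the NP-criterion still forces some $H_i$ to lie in $\mathcal{F}$, i.e. to be a tree, and the only non-path trees on at most five vertices are $K_{1,3},K_{1,4},fork$. Thus every unresolved pair contains one of $K_{1,3},fork,K_{1,4},P_5$, as claimed.

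It remains to sweep the partners of each pivot. For $P_5$ the NP-criterion fails iff the partner $G$ lies in $co(\mathcal{T})$, since $P_5\in\mathcal{F}\cap\mathcal{T'}$, $P_5\notin co(\mathcal{T})$, and $P_5$ contains neither $K_{1,4}$ nor $bull$; and among five-vertex partners the only conditions that can fire are those for $Free(\{P_5,K_5\})$ and $Free(\{P_5,gem\})$, eliminating precisely $K_5$ and $gem$ and leaving the ten remaining five-vertex members of $co(\mathcal{T})$. For $K_{1,3}$ the partner must lie in $\mathcal{T'}$; the connected $\mathcal{T'}$-graphs of order at most five are $P_5,hammer,bull,butterfly$ together with smaller ones, and the conditions for $Free(\{K_{1,3},P_5\})$, $Free(\{K_{1,4},paw\})$ and $Free(\{K_{1,3},hammer\})$ resolve everything except $bull$ and $butterfly$. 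For $fork$ the partner must lie in $\mathcal{T'}\cap co(\mathcal{T})$, which among five-vertex graphs is only $bull$ (as $P_5,hammer,butterfly\notin co(\mathcal{T})$), and $bull$ survives all tractability tests; for $K_{1,4}$ the same intersection arises, but now the $K_{1,4}$--$bull$ rule makes $\{K_{1,4},bull\}$ NP-complete, so this pivot contributes nothing. Collecting the survivors gives exactly $\{K_{1,3},bull\}$, $\{K_{1,3},butterfly\}$, $\{fork,bull\}$ and the ten pairs $\{P_5,G\}$.

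The main obstacle is bookkeeping rather than a single hard idea. The delicate point is the enumeration of the connected five-vertex members of $co(\mathcal{T})$ and the verification that, after discarding $K_5$ and $gem$, exactly ten remain; this is cleanest done by listing the forests of $\mathcal{S}$ whose line graphs have five vertices and then complementing, and it is this step that pins down the number in the statement. A secondary care is to confirm the exhaustiveness of the pivot reduction, in particular that pairs carrying two pivots, such as $\{K_{1,3},fork\}$ or $\{K_{1,4},P_5\}$, are genuinely resolved---there a failure of $\mathcal{T'}$- or $co(\mathcal{T})$-membership immediately triggers NP-completeness---so that nothing is omitted and nothing is counted twice.
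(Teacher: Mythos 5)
Your proposal is correct and follows exactly the route the paper takes: the paper's entire argument is the observation that any pair escaping Theorem 3 must contain one of $K_{1,3}$, $fork$, $K_{1,4}$, $P_5$ (forced by the $\mathcal{F}$-membership clause of the NP-criterion), followed by an implicit sweep of partners, and your membership table for $\mathcal{F}$, $\mathcal{T'}$, $co(\mathcal{T})$ and the resulting case analysis (including the count of ten via line graphs of five-edge forests in $\mathcal{S}$) simply makes explicit what the paper leaves to the reader. The only cosmetic omission is that the $P_5$-pivot sweep should also note that every connected partner on at most four vertices is already resolved (paths by the $P_4$ condition, $C_3$ and $K_4$ via $K_5$, $paw$ and $K_4-e$ via $gem$, and $C_4$, $K_{1,3}$ by their own conditions), which is routine.
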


Determining the complexity of the problem for any of the listed
above 13 cases is a challenging research problem.


\begin{thebibliography}{}

\bibitem{A04} Alekseev V (2004) On easy and hard hereditary classes of graphs
with respect to the independent set problem. Discrete Applied
Mathematics 132:17--26.

\bibitem{ABKL07} Alekseev V, Boliac R, Korobitsyn D and
Lozin V (2007) NP-hard graph problems and boundary classes of
graphs. Theoretical Computer Science 389:219--236.

\bibitem{AKL04} Alekseev V, Korobitsyn D and Lozin V (2004) Boundary classes of graphs
for the dominating set problem. Discrete Mathematics 285:1--6.

\bibitem{BDLM02} Brandst\"{a}dt A, Dragan F, Le H and Mosca R (2002)
New graph classes of bounded clique-width. Lecture Notes in Computer
Science 2573:57--67.

\bibitem{BELH06} Brandst\"{a}dt A, Engelfriet J, Le H and Lozin V (2006)
Clique-width for 4-vertex forbidden subgraphs. Theory Comput. Syst.
39:561--590.

\bibitem{BoMu08} Bondy A and Murty U (2008) Graph theory. Springer-Verlag,
Graduate texts in mathematics.

\bibitem{CO00} Courcelle B and Olariu S (2000) Upper bounds to the clique width of
graphs. Discrete Applied Mathematics 101(1--3):77--144.

\bibitem{DLRR12} Dabrowski K, Lozin V, Raman R and Ries B (2012) Colouring
vertices of triangle-free graphs without forests. Discrete
Mathematics 312:1372--1385.

\bibitem{D10} Diestel R (2010) Graph theory. Springer-Verlag,
Graduate texts in mathematics.

\bibitem{GPJ11} Golovach P, Paulusma D and Song J (2011) Coloring graphs
without short cycles and long induced paths. Lecture Notes in
Computer Science 6914:193--204.

\bibitem {GP13} Golovach P and Paulusma D (2013) List coloring in the absence of two
subgraphs. CIAC 288--299.

\bibitem{G80} Golumbic M. Algorithmic graph theory and perfect graphs (1980) Academic
Press, New York.

\bibitem{KLR03} Kochol M, Lozin V and Randerath B (2003) The 3-colorability
problem on graphs with maximum degree four. SIAM J. Computing
32:1128-1139.

\bibitem{KLMT11} Korpeilainen N, Lozin V, Malyshev D and Tiskin A (2011)
Boundary properties of graphs for algorithmic graph problems.
Theoretical Computer Science 412:3544--3554.

\bibitem{KKTW01} Kral' D, Kratochvil J, Tuza Z and Woeginger G (2001) Complexity of
coloring graphs without forbidden induced subgraphs. Lecture Notes
in Computer Science 2204:254--262.

\bibitem{LK07} Lozin V and Kaminski M (2007) Coloring edges and vertices of
graphs without short or long cycles. Contributions to Discrete
Mathematics 2(1).

\bibitem{S05} Schindl D (2005) Some new hereditary classes where graph coloring
remains NP-hard. Discrete Math. 295:197--202.

\bibitem{MP96} Maffray F and Preissmann M (1996) On the NP-completeness of the
$k$-colorability problem for triangle-free graphs. Discrete
Mathematics 162(1--3): 313--317.

\bibitem{M09} Malyshev D (2009) Continual sets of boundary classes of graphs for
colorability problems. Discrete Analysis and Operations Research
16(5):41--51 (in Russian).

\bibitem{M12(a)} Malyshev D (2012) On intersection and symmetric difference of
families of boundary classes in the problems of colouring and on the
chromatic number. Discrete Mathematics 24(2):75--78 (in Russian).
English translation in Discrete Mathematics and Applications (2011)
22(5--6):645--649.

\bibitem{M12(b)} Malyshev D (2012) A study of boundary graph classes for colorability problems.
Discrete Analysis and Operations Research 19(6):37--48 (in Russian).
English translation in Journal of Applied and Industrial Mathematics
(2013) 7(2):221--228.

\end{thebibliography}
\end{document}